\def\endthebibliography{%
	\def\@noitemerr{\@latex@warning{Empty `thebibliography' environment}}%
	\endlist
}
\definecolor{lightblue}{RGB}{180,220,255}  
\newcommandx{\mt}[2][1=]{\todo[linecolor=green,backgroundcolor=green!25,bordercolor=green,author=Mahdi,#1]{#2}}
\newcommand{\inlinecomment}[1]{} 
\newtheorem{remark}{\textbf{Remark}}
\newtheorem{lem}{\textbf{Lemma}}
\newtheorem{theorem}{\textbf{Theorem}}
\newtheorem{assumption}{\textbf{Assumption}}
\title{\LARGE \bf
	Closing the Loop Inside Neural Networks: Causality-Guided Layer Adaptation for Fault Recovery Control}
\author{Mahdi Taheri$^{1}$, Soon-Jo Chung$^{1}$, and Fred Y. Hadaegh$^{1}$
	\thanks{$^{1}$Division of Engineering and Applied Science, California Institute of Technology (Caltech), Pasadena, CA 91125, USA. E-mail:  \{mtaheri, sjchung, hadaegh\}@caltech.edu.}%
	\thanks{
		This research is funded in part by the Technology Innovation Institute and the Defense Advanced Research Projects Agency (Learning Introspective Control).}
}
\begin{document}

	\maketitle
	\thispagestyle{empty}
	\pagestyle{empty}

	\begin{abstract}
		This paper studies the problem of real-time fault recovery control for nonlinear control-affine systems subject to actuator loss of effectiveness faults and external disturbances. We develop a two-stage framework that combines causal inference with selective online adaptation to achieve an effective learning-based recovery control method. In the offline phase, we introduce a causal layer attribution technique based on the average causal effect (ACE) to evaluate the relative importance of each layer in a pretrained deep neural network (DNN) controller compensating for faults. This provides a principled approach to select the most causally influential layer for fault recovery control in the sense of ACE, and goes beyond the widely used last-layer adaptation approach. In the online phase, we deploy a Lyapunov-based gradient update to adapt only the ACE-selected layer to circumvent the need for full-network or last-layer only updates. The proposed adaptive controller guarantees uniform ultimate boundedness (UUB) with exponential convergence of the closed-loop system in the presence of actuator faults and external disturbances. Compared to conventional adaptive DNN controllers with full-network adaptation, our methodology has a reduced computational overhead in the online phase. To demonstrate the effectiveness of our proposed methodology, a case study is provided on a $3$-axis attitude control system of a spacecraft with four reaction wheels. 
				
	\end{abstract}
	
	\section{Introduction}
	Autonomous systems ranging from unmanned aerial vehicles (UAV) to spacecraft are required to have high levels of resilience to operate safely in dynamic and uncertain environments \cite{o2024learning,tsukamoto2024neural}. These systems frequently encounter challenges such as unpredictable external disturbances and various types of faults, which push them into out‑of‑distribution (OOD) conditions beyond their nominal design assumptions \cite{4014460}. Under such circumstances, conventional controllers often fail to recover the system since they lack the level of adaptability required to maintain a reliable performance.

    Adaptive control methods \cite{park2021adaptive,5717148} offer strong stability guarantees when dealing with structured uncertainty or known fault models. However, their effectiveness depends critically on accurate system modeling and prior knowledge of fault characteristics.\inlinecomment{This reliance becomes a significant limitation in real-world deployments, where system degradation and fault signatures may evolve unexpectedly.} Data‑driven controllers that leverage deep neural networks (DNNs) provide a promising alternative by capturing complex mappings directly from data. However, their performance declines when faced with OOD scenarios during runtime~\cite{jiahao2023online}. Hence, online learning approaches where neural network parameters are updated online have been employed as a solution to make a closed-loop autonomous system robust to modeling uncertainties and faults \cite{he2024self,9337905}. However, applying updates across the entire network can result in catastrophic forgetting \cite{rusu2016progressive} and impose extensive computational overhead for real-time implementation.
	
	\subsection{Related Work}
	Spectral normalization, which constrains the Lipschitz constant of each DNN layer by limiting its spectral norm, has been shown to improve generalization under distribution changes by controlling layer-wise sensitivity \cite{miyato2018spectral}. This structural regularization has inspired adaptive strategies that emphasize layer-wise DNN adjustments during online operation. For instance, \cite{o2022neural} utilizes spectral normalization and a meta-learning framework to identify and learn disturbance features offline and update only the final layer of DNN online for rapid residual learning and disturbance rejection. Moreover, in the off-road vehicle domain, \cite{lupu2024magic} proposes an offline meta‑learning method that processes terrain images via a visual foundation model and adapts only the DNN's last layer online to compensate for unseen scenarios during offline training. Progressive neural networks freeze earlier layers and add new ones to continually learn without catastrophic forgetting \cite{rusu2016progressive}.
	
	In the control systems domain, hybrid adaptive methods are utilized to estimate uncertain dynamics via Bayesian learning combined with control‑barrier functions~\cite{dhiman2021control}. In ~\cite{shen2025adaptive}, a long-short-term memory (LSTM)-based identifier whose weights are adapted online under Lyapunov guarantees has been employed. Sparse adaptation techniques utilizing $L_1$ regularization have been proposed to detect actuator faults and reconfigure controllers \cite{o2024learning}. More classical fault-tolerant adaptive control laws update parameters continuously to accommodate various actuator faults \cite{yadegar2021fault}, and reachable set techniques adjust safety envelopes under fault scenarios~\cite{el2023online}. Although these approaches span a wide range of strategies, they treat the DNN-based controller as a single block. In contrast, our framework incorporates causal inference and average causal effect (ACE) to determine which subset of DNN layers is the most critical for fault recovery, and then, considering formal convergence guarantees, it applies adaptive laws specifically to those layers.


    \subsection{Contributions}
    In this paper, first, a nominal controller is designed for the system with no actuator fault and external disturbance. Consequently, we train a DNN controller offline to compensate for a wide range of fault and disturbance scenarios that the nominal controller cannot cope with. However, since our DNN controller may still encounter actuator faults and disturbances for which it was not trained, we introduce a method to identify and adapt only the most causally influential layer of the network. This ensures control recovery effectiveness under OOD conditions. Our methodology operates in two phases. In the offline phase, we simulate various actuator failure and degradation conditions under external disturbances to train the DNN controller in a supervised manner and similar to the imitation learning (IL) approach \cite{tsukamoto2024neural}. Consequently, we compute and use the ACE to measure how changes in each layer's weights affect the closed-loop tracking error. This step allows us to identify which layer has the most significant causal impact on the system's performance under fault conditions. Although the ACE evaluation can be computationally expensive, it is executed offline and before deployment.
    
    In the online phase, only the layer identified by the ACE is updated using an adaptive control law that ensures the tracking error remains uniformly ultimately bounded (UUB) despite unknown faults and disturbances. Subsequently, we investigate and study the case where a fault detection and identification (FDI) module can provide an estimate of the fault.\inlinecomment{ It is shown how utilizing the FDI fault estimates can reduce and improve our error bounds.} This work introduces a novel framework for online learning-based fault recovery control that combines causal inference with exponentially stabilizing online adaptation (see Fig.~\ref{fig:nnUpdate}). The main contributions of the paper are:
	\begin{enumerate}
		\item We introduce a structural causal model (SCM) for the DNN-based controller and develop a new methodology to compute the ACE of each layer on the closed-loop tracking error. This enables one to carry out an offline layer importance evaluation, which will guide future online layer updates.
		\item We develop an adaptation law with exponential stability that updates only the ACE-selected layer, which reduces the computational overhead in the online phase and avoids full-network updates. We prove that this selective adaptation guarantees the UUB of the tracking error in the presence of actuator faults and external disturbances.
		\item We integrate causal inference with adaptive control to identify and adapt an internal DNN layer (beyond the last-layer adaptation in \cite{o2024learning,o2022neural,lupu2024magic}) for real-time fault recovery in nonlinear systems.
	\end{enumerate}

    \begin{figure*}[!t]
		\centering
		\begin{subfigure}[t]{0.48\textwidth}
			\centering
			\includegraphics[width=\linewidth]{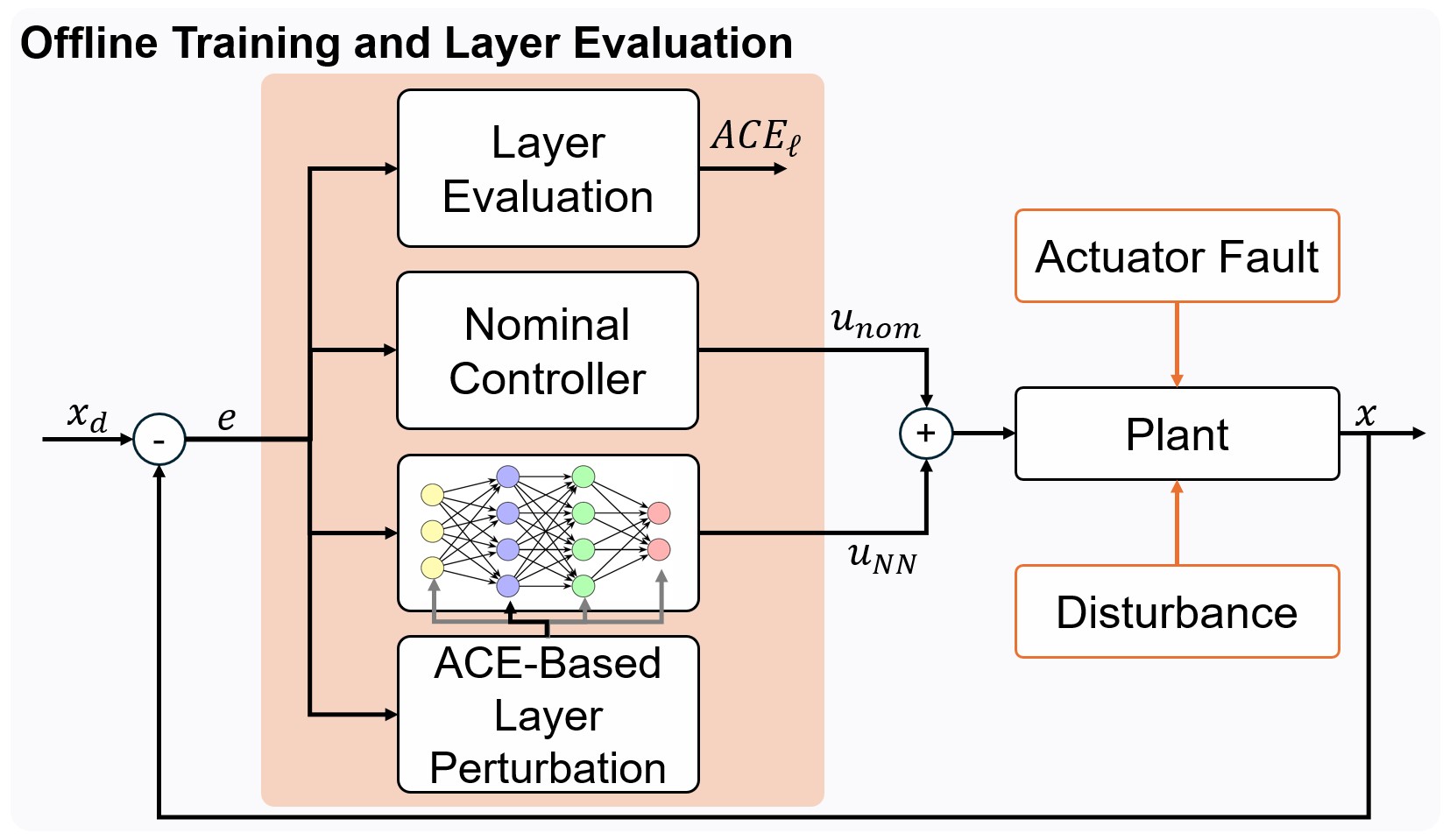}
			\caption{Offline layer‐importance evaluation to select the primary hidden layer for adaptation.}
			\label{fig:nnUpdateOffline}
		\end{subfigure}\hfill
		\begin{subfigure}[t]{0.48\textwidth}
			\centering
			\includegraphics[width=\linewidth]{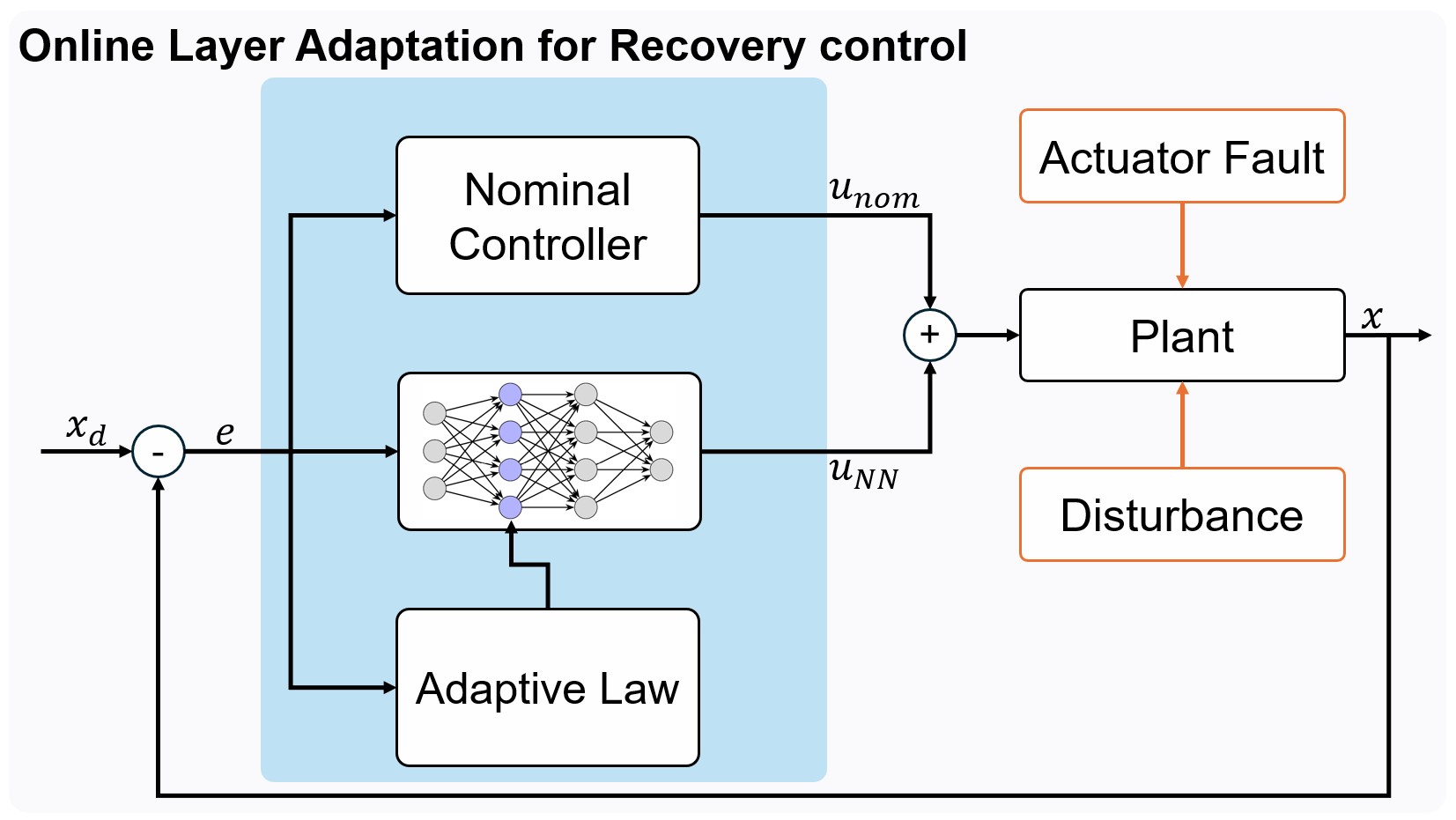}
			\caption{Closed‑loop control architecture showing the adaptive law applied to the selected layer.}
			\label{fig:nnUpdateOnline}
		\end{subfigure}
		\caption{Proposed ACE-guided adaptive fault-recovery framework.}
		\label{fig:nnUpdate}
	\end{figure*}

	\section{Problem Formulation}
	
	\subsection{Notations}
	For a matrix $A$, $\lambda_{\min}(A)$ and $\lambda_{\max}(A)$ denote its minimum and maximum eigenvalues, respectively. The operator $\mathrm{diag}(\cdot)$ forms a diagonal matrix; $\mathrm{tr}(\cdot)$ denotes the trace; $\mathrm{Im}(\cdot)$ is the image (column space); and $A^\dagger$ represents a bounded right-inverse (i.e., the Moore-Penrose pseudo-inverse restricted to $\mathrm{Im}(A)$). The symbol $\|\cdot\|$ denotes the Euclidean $2$-norm; $\|\cdot\|_{\mathrm{F}}$ denotes the Frobenius norm. Finally, $\|\cdot\|_{\mathrm{Lip}}$ represents the global Lipschitz constant of a map.
	
	\subsection{System Model}
	We consider a nonlinear control‐affine system as
	\begin{equation}\label{e:sys}
		\dot{x} = f(x) + g(x)\,\Lambda u + d, 
	\end{equation}
	where $x\in\mathbb{R}^n$ is the state, $u\in\mathbb{R}^m$ is the control input, and $d\in \mathbb{R}^n$ is a bounded disturbance, i.e., $\|d\|\le \bar{d}$, where $\bar{d}>0$ is an unknown upper bound. Also, $f(x)$ and $g(x)$ are known smooth functions. Moreover, $\Lambda=\mathrm{diag}(\eta_1(t),\dots,\eta_m(t))$ is an unknown diagonal matrix, where $0<\eta_k(t)\le1$ denotes a loss of effectiveness fault in the $k$-th actuator, for $k=1,\dots,m$. In this paper, we consider $\eta_\mathrm{min}\le\eta_k\le1$, where $\eta_\mathrm{min}>0$ is an unknown value that indicates the minimum effectiveness of the actuator, and $\eta_k=1$ indicates that the $k$-th actuator is fault-free or healthy. Also, we assume that $d\in\mathrm{Im}(g(x))$, which implies that \eqref{e:sys} has a matched disturbance.
	
	\subsection{Controller Design and Tracking Error}
	Consider \eqref{e:sys} under fault-free and disturbance-free conditions, i.e., $\Lambda=I$ and $d=0$. A nominal control law can be designed in the following form:
	\begin{equation}\label{e:u_nom}
		u_{\mathrm{nom}} = g(x)^\dagger(\dot{x}_d - f(x) - K e),
	\end{equation}
	where $e=x-x_d$ is the tracking error and $x_d\in\mathbb{R}^n$ is the desired trajectory. We choose $(x_d,K)$ so that $\dot x_d - f(x) - K e \in \mathrm{Im}(g(x))$ for all $t$. Moreover, the control gain $K\succ~0$ guarantees $\dot{V_0}(e)\le -\alpha_1(\|e\|)$ under nominal conditions, i.e., $\Lambda=I$ and $d=0$, where $V_0(e)=\frac{1}{2}e^\top Pe$ is a Lyapunov candidate function, $P\succ0$ and symmetric, and $\alpha_1(\cdot)$ is a class-$\mathcal{K}$ function \cite{slotine1991applied}.  
	
	
	Under nominal conditions, having $u=u_{\mathrm{nom}}$ guarantees the exponential stability of the closed-loop system \cite{slotine1991applied}. However, this controller cannot guarantee the boundedness of the tracking error $e$ in the presence of faults and disturbances. Hence, we design a DNN-based controller $u_{\mathrm{NN}}(e,u_{\mathrm{nom}};\bar{W})$ that can compensate for the impact of the actuator fault and the disturbance in the closed-loop system such that the total control input is $u=u_{\mathrm{nom}}+u_{\mathrm{NN}}$, where $\bar{W}$ denotes the design parameters of the DNN.

	\subsection{Problem Statement}
	Given $u_{\mathrm{NN}}$, we identify the update of which DNN layer can result in minimization of tracking error in the presence of unknown faults and disturbances.\inlinecomment{In this step, we compute the ACE of updating each layer of the DNN and identify the layer that should be updated to recover the system under faults and disturbances.} The second problem is to find an adaptive law that can be used to update the weights of the identified layer in the previous step. Moreover, the adaptive law should guarantee the uniform ultimate boundedness of the tracking error.

	\section{Offline Phase: DNN Training and Layer Importance Evaluation}\label{sec:offline}

	\subsection{Data Generation and Supervised DNN Training}\label{ss:train}
	
	A rich dataset of fault and disturbance scenarios is needed to train the $u_{\mathrm{NN}}$ in a supervised manner to develop a baseline fault compensator. Hence, we sample $M$ scenarios $\{\hat{\Lambda}^{(i)},\,d^{(i)}\}_{i=1}^M$, where $\hat{\Lambda}^{(i)}=\mathrm{diag}(\eta_1^{(i)},\dots,\eta_m^{(i)})$ with each $\eta_\mathrm{min}<\eta_k^{(i)}\le 1$ covering light to severe actuator degradation, and $d^{(i)}$ is a bounded disturbance. The $i$-th fault initiation time $t_f^{(i)}$ is selected randomly. Moreover, the desired trajectory $x_d^{(i)}$ can be designed based on our control tracking objectives, and the initial conditions are randomized.
	
	For each scenario $i$, we have
	\begin{equation}\label{e:sys_simu}
		\dot{x}^{(i)} = f(x^{(i)}) + g(x^{(i)})\,\hat{\Lambda}^{(i)}[u_{\mathrm{nom}}^{(i)} + u_{\mathrm{comp}}^{(i)}] + d^{(i)},
	\end{equation}
	where $x^{(i)}\in\mathbb{R}^n$, $u_{\mathrm{nom}}^{(i)}$ is the nominal controller for the $i$-th scenario as per \eqref{e:u_nom}, and $u_{\mathrm{comp}}^{(i)}$ is an ideal controller that compensates for the impact of faults and disturbances. The $u_\mathrm{comp}$ is employed only during the offline phase to generate expert demonstrations that capture actuator fault
	and disturbance effects beyond the nominal controller's capability.
	
	One can rewrite \eqref{e:sys_simu} as $\dot{x}^{(i)} = f(x^{(i)}) + g(x^{(i)})u_{\mathrm{nom}}^{(i)}+g(x^{(i)})u_{\mathrm{comp}}^{(i)} +g(x^{(i)})[\hat{\Lambda}^{(i)}-I]u^{(i)} + d^{(i)}$, where $u^{(i)}=u_{\mathrm{nom}}^{(i)}+u_{\mathrm{comp}}^{(i)}$. Since $d\in \mathrm{Im}(g(x^{(i)}))$, one can always find a certain $\hat{d}^{(i)}$ such that $d^{(i)}=g(x^{(i)})\hat{d}^{(i)}$. Hence, the ideal compensator can be designed as $u_{\mathrm{comp}}^{(i)}=-(\beta^{(i)} u^{(i)} + \hat{d}^{(i)})$, where $\beta^{(i)}=\hat{\Lambda}^{(i)}-I$. Considering that each element of the diagonal matrix $\beta^{(i)}$ is greater than $-1$ and less than or equal to $0$, the matrix $I+\beta^{(i)}$ is invertible. Thus, one obtains $u_{\mathrm{comp}}^{(i)}=-[I+ \beta^{(i)}]^{-1} (u_{\mathrm{nom}}^{(i)} + \hat{d}^{(i)})$. Considering $e^{(i)}=x^{(i)}-x_d^{(i)}$, $u_{\mathrm{nom}}^{(i)}$, and $u_{\mathrm{comp}}^{(i)}$, the training dataset $\mathcal{D}_{\mathrm{train}} = \{(e^{(i)},u_{\mathrm{nom}}^{(i)},u_{\mathrm{comp}}^{(i)})\}_{i=1}^{M}$ can be created.

	Consequently, one can train $u_{\mathrm{NN}}(e,u_{\mathrm{nom}};\bar{W})$ by minimizing the following loss function:
	\begin{align*}
		\mathcal{L}_{\mathrm{train}}(\bar{W})
		= \frac{1}{M} \sum_{i=1}^M \frac{1}{T_i}\int_{0}^{T_i} (\|u_{\mathrm{NN}}-u_{\mathrm{comp}}^{(i)}\|^2 +\lambda_\text{e} \|\hat{e}^{(i)}\|^2),
	\end{align*}
    where $[0,T_i]$ denotes the time window of the $i$-th scenario, and $\dot{\hat{e}}^{(i)}=f(x^{(i)}) + g(x^{(i)})\,\hat{\Lambda}^{(i)}[u_{\mathrm{nom}}^{(i)} + u_{\mathrm{NN}}] + d^{(i)}-\dot{x}_\text{d}^{(i)}$. We also utilize the spectral normalization method \cite{yoshida2017spectral} to enforce a global Lipschitz bound on the $u_\mathrm{NN}$. If the activation function of the DNN has a Lipschitz constant equal to $1$, one obtains $\|u_\mathrm{NN}\|_{\mathrm{Lip}}\le \|W_{1}\cdots W_{L}\bigr\|\le \prod_{\ell=1}^{L}\!\|W_{\ell}\|$, where $W_\ell$ is the $\ell$-th weight matrix of the DNN with appropriate dimensions, for each $\ell=1,\dots,L$. Hence, enforcing $\|W_\ell\|<1$ for every layer results in having $\|u_\mathrm{NN}\|_{\mathrm{Lip}}\le1$.

    The DNN training procedure described above is similar to the behavioral cloning in imitation learning (IL) \cite{argall2009survey,tsukamoto2024neural}. The analytically derived compensator $u_{\mathrm{comp}}$ serves as an expert policy, the collected tuples $(e^{(i)},u_{\mathrm{nom}}^{(i)},u_{\mathrm{comp}}^{(i)})$ constitute the set of demonstrations, and the quadratic loss $\mathcal{L}_{\mathrm{train}}(\bar{W})$ minimizes the error between the learner $u_{\mathrm{NN}}$ and the expert. A distribution shift or mismatch (e.g., unseen fault and disturbance scenarios) can arise when the learner, i.e., $u_{\mathrm{NN}}$, is deployed. Hence, in the next subsection, the ACE-based method is proposed to find the parameters that the learner can update in response to the mentioned distribution shifts.

    Let $\zeta=[u_{\mathrm{nom}}^\top \, e^\top]^\top$. The DNN structure can be shown as $z_{0} =\zeta$, $z_{\ell}=\phi_{\ell}(a_{\ell})$, $a_{\ell}={W_{\ell}} z_{\ell-1}+b_{\ell}$, for all $\ell=1,\dots,L-1$, where $\phi_{\ell}$ is a smooth activation function (e.g., tanh, sigmoid, softplus), $a_{\ell}\in\mathbb{R}^{n_{\ell}}$, $b_{\ell}\in\mathbb{R}^{n_{\ell}}$ is the bias, $z_{\ell-1}\in\mathbb{R}^{n_{\ell-1}}$, and $W_{\ell}\in\mathbb{R}^{n_{\ell}\times n_{\ell-1}}$. The output of the DNN-based controller is	$u_{\mathrm{NN}}=-{W_{L}} z_{L-1}$.
	
	\subsection{Average Causal Effect (ACE) for Layer Importance Evaluation}
	In \cite{chattopadhyay2019neural}, a causal‐attribution framework, which recasts a DNN as a structural causal model (SCM) is proposed to quantify the ACE of any neuron on the output. We adopt this approach and extend it from neuron-level explanations to layer-level importance evaluation for our closed-loop system. In particular, we utilize the $\mathrm{do}$-operator (see \cite{pearl2009causal}) to impose interventions on each weight matrix $W_{\ell}$. These interventions are used to perturb layers of the $u_\mathrm{NN}$ and measure how they alter our closed-loop tracking error $e=x-x_d$. By computing the expected change in $\|e\|$ caused by a small random offset in each layer, one can obtain an ACE score that directly measures how sensitive our error is to a change in layer $\ell$. This causality-based approach offers a principled alternative to gradient-based attribution methods \cite{ancona2018towards} to decide which layer to update and compensate for the impact of faults and disturbances. It should be noted that the this step is carried out offline and before deploying $u_\mathrm{NN}$.
    
	To formalize the above, we represent the DNN-based controller as an SCM. Let the set of nodes $\mathcal{V} =~\{z_0(t),\dots,z_{L}(t),u_{\mathrm{NN}}(t),e(t+\Delta t)\}$ with directed edges in $\mathcal{E}$ that follow the cascade $z_0(t) \to z_1(t) \to \cdots \to z_L(t) \to u_{\mathrm{NN}}(t) \to e(t+\Delta t)$, where $\Delta t>0$ denotes a small time increase in the continuous-time evolution, such that causal ordering is interpreted in a discrete-time unrolled sense and variables at time $t$ influence the state and tracking error at the next instant $t+\Delta t$. In this representation, the weights $W_\ell$ parameterize the edges and interventions are written as $\mathrm{do}(W_\ell \mapsto W_\ell+\Delta_\ell)$. This demonstrates how perturbations propagate from layer $\ell$ to the error $e$.
	
	Let us define the fault, disturbance, and desired trajectory triplet $(\check{\Lambda},\check{d},\check{x}_d)$ that was not used to train the $u_\mathrm{NN}$ in Section~\ref{ss:train}. For each layer $\ell$, we define the ACE as
	\begin{align}\label{e:ace}
		\mathrm{ACE}_{\ell}:= \mathbb{E}[\|\check{e}\|\mid \mathrm{do}(W_{\ell}+\Delta_\ell(t))] -\mathbb{E}[\|\check{e}\|\mid W_{\ell}],
	\end{align}
	where $\mathbb{E}[\|\check{e}\| \mid \mathrm{do}(W_{\ell}+\Delta_\ell(t))]$ is the expected value of the error $\check{e}=\check{x}-\check{x}_d$, and $\check{x}$ is the state generated using \eqref{e:sys} but the actuator fault is set to $\check{\Lambda}$, the disturbance is $\check{d}$, and $\Delta_\ell(t)$ is added to the weight matrix of the $\ell$-th layer of the $u_\mathrm{NN}$. Moreover, $\mathrm{do}(W_{\ell}+\Delta_\ell(t))$ denotes a causal intervention that perturbs layer $\ell$ by a small weight offset $\Delta_\ell(t)$, i.e., the weight of the $\ell$-th layer becomes $W_{\ell}+\Delta_\ell(t)$ (see \cite{pearl2009causal} for more details on the $\mathrm{do}$-operator). We set $\Delta_\ell(t)\sim \mathcal{N} (0,	\rho_\ell^{2}\,I)$, where $\rho_\ell$ is the standard deviation of the perturbation for the $\ell$-th layer. A negative $\mathrm{ACE}_{\ell}$ means that intervening on $\ell$ decreases the tracking error. Thus, the $\ell$-th layer with the smallest $\mathrm{ACE}_{\ell}$ is causally important for compensating for out-of-distribution (OOD) fault and disturbance scenarios. One can also evaluate \eqref{e:ace} for various OOD scenarios of $(\check{\Lambda},\check{d},\check{x}_d)$ and average the resulting $\mathrm{ACE}_{\ell}$ to generalize and improve the ACE-based layer importance evaluation.
	
	Monte Carlo methods are widely used to approximate intractable quantities \cite{gal2016dropout}, such as $\mathrm{ACE}_{\ell}$. Since the exact computation of \eqref{e:ace} is challenging, one can approximate it by using a Monte Carlo estimation as $\widehat{\mathrm{ACE}}_{\ell}	=\frac{1}{N}\sum_{j=1}^N \|e_{j}(W_{\ell}+\Delta_\ell^{(j)})\|-\|e_{j}(W_{\ell})\|$, where we sample $N$ random perturbations $\Delta_\ell^{(j)}$ and simulate the closed‐loop error under the corresponding faults and disturbances. Consequently, we select a layer with the minimum $\widehat{\mathrm{ACE}}_{\ell}$ as the best (in the sense of the ACE) to update in the online step.
	
	Table~\ref{tab:layer_selection_comparison} summarizes various strategies for selecting which DNN layer to adapt during the online step for fault recovery control. These approaches differ in how they quantify layer importance and range from fixed structural rules and local sensitivity metrics to magnitude-based heuristics. In contrast, our proposed ACE-based strategy provides a principled, causal criterion that identifies the layer with the greatest impact (in the sense of \eqref{e:ace}) on the closed-loop tracking error to be updated for fault recovery.

	\begin{table}[t]
		\centering
		\caption{Comparison of layer evaluation strategies.}
		\label{tab:layer_selection_comparison}
		\renewcommand{\arraystretch}{1.2}
		\begin{tabular}{@{}p{2.2cm}p{3.7cm}p{3.cm}@{}}
			\toprule
			\textbf{Method} & \textbf{Description} & \textbf{Main Limitations} \\ 
			\midrule
			
			\textbf{Last‐Layer‐Only} \cite{o2022neural,lupu2024magic} &
			Updates only the final layer of the DNN during online operation.  Assumes early layers retain valid representations across all faults and disturbances. &
			Ignores potential inaccuracy of internal representations; may fail under severe or multiple actuator faults. \\[1mm]
			
			\textbf{Jacobian‐Based Sensitivity} \cite{ancona2018towards}&
			Selects the layer with highest local sensitivity, e.g., $\|\tfrac{\partial u_{\mathrm{NN}}}{\partial W_\ell}\|$ or $\|\tfrac{\partial \mathcal{L}}{\partial W_\ell}\|$.  Measures instantaneous influence of weights on the DNN's output or loss. &
			Captures only short‐term correlations; neglects closed‐loop dynamics and long‐horizon fault effects. \\[1mm]
			
			\textbf{Weight‐Magnitude} \cite{han2015learning} &
			Ranks neurons in each layer based on $\|W_\ell\|$ or $|W_\ell|$ to approximate contribution of each layer. &
			Static and does not reflect how layer perturbations propagate through the system dynamics. \\[1mm]
			
			\textbf{Our ACE‐Based}  &
			Computes $\mathrm{ACE}_{\ell}$ in \eqref{e:ace} to quantify the causal impact of each layer on the closed‐loop error. &
			Requires offline Monte Carlo evaluation. \\		
			
			\bottomrule
		\end{tabular}
	\end{table}

    \begin{remark}
	The definition in \eqref{e:ace} uses $\|\check{e}\|$ as the performance metric. Alternatively, one can adopt a Lyapunov-based measure that ties ACE directly to the closed-loop stability. In particular, we can define $\mathrm{ACE}_{\ell}^V = \mathbb{E} [\dot{V}_0\mid \mathrm{do}(W_{\ell}+\Delta_\ell(t))] - \mathbb{E}[\dot{V}_0 \mid W_\ell]$. This definition of the ACE quantifies how interventions in layer $\ell$ affect the Lyapunov drift and provides a stability-oriented criterion for layer selection.
    \end{remark}

	\section{Online Adaptation of the DNN for Fault Recovery Control}\label{sec:online}
	From Section~\ref{sec:offline}, one can identify which layer in the DNN should be updated online to compensate for actuator faults and disturbances, as shown in Fig.~\ref{fig:nnUpdateOffline}. In this section, we first derive an appropriate adaptive law for $u_{\mathrm{NN}}$, and then show that the aforementioned adaptive law guarantees UUB of the tracking error. In Fig.~\ref{fig:nnUpdateOnline}, the closed‑loop system and adaptive law for the selected layer are depicted. Finally, we show that if there is a fault detection and identification (FDI) module that can provide an estimate of fault values, that is, $\Lambda$, the error bounds would decrease.
	\subsection{Adaptive Law for a Selected Layer}\label{ssec:law}
		Suppose $\ell^*\in\{1,\dots,L\}$ is the selected layer in the previous section to compensate for OOD scenarios. Substituting $u=u_{\mathrm{nom}}+u_{\mathrm{NN}}$ in \eqref{e:sys}, for the error dynamics one obtains
	\begin{equation}\label{e:error}
		\dot{e}=[f(x)+g(x)u_{\mathrm{nom}}-\dot{x}_d]+g(x)(u_{\mathrm{NN}}+r),
	\end{equation}
	where $r=[\Lambda-I]u+\hat{d}$ and $d=g(x)\hat{d}$. 
	\begin{assumption}\label{assum:idealWeight}
		Let $\ell^*\neq L$. There exists a bounded and constant ideal weight $W_{\ell^*}^*=W^*$ such that $r=W_{L} z_{L-1}^*+\epsilon(t)$, where $z_{L-1}^*$ captures the impact of the ideal weight in the $\ell^*$-th layer, i.e., $W_{\ell}^*$, and $\epsilon(t)$ is the bounded approximation error satisfying $\|\epsilon(t)\|\leq \bar{\epsilon}$. If $\ell^*=L$, we assume $r=W^* z_{L-1}+\epsilon(t)$.
	\end{assumption} 
	It should be noted in Assumption~\ref{assum:idealWeight}, $W^*$ is the ideal reference weight toward which the adaptive law drives the estimated weights $W_{\ell^*}$, thereby ensuring minimization of the tracking error within a certain bound (see Theorem~\ref{th:uub}). The existence of such an ideal weight $W^*$ is a standard assumption in adaptive control and neural network–based approximation frameworks, and is commonly employed to establish convergence and boundedness properties \cite{slotine1991applied,o2022neural,lupu2024magic}.
    
	One needs to come up with an adaptive law for the DNN controller such that $u_{\mathrm{NN}} \approx -{W_{L}} z_{L-1}^*$ if $\ell^*\neq L$ and $u_{\mathrm{NN}}~\approx~-{W_{L}^*} z_{L-1}$, otherwise. Therefore, we derive the adaptive law for both cases of $\ell^*\neq L$ and $\ell^*=L$, below.

	Let us consider the virtual loss function $\mathcal{L} = \delta_L^\top u_{\mathrm{NN}}$ for updating the parameters of $u_{\mathrm{NN}}$, where $\delta_L=g(x)^\top P e\in\mathbb{R}^m$. Moreover, for $\ell=L-1,\dots, 1$, we define
	\begin{equation}\label{e:delta}
		\delta_{\ell}=\Psi_{\ell}(a_{\ell})W_{\ell+1}^\top \delta_{\ell+1},
	\end{equation}
	where $\Psi_{\ell}=\mathrm{diag}(\psi_{1,\ell}^\prime(a_{1,\ell}),\dots,\psi_{n_\ell,\ell}^\prime (a_{n_\ell,\ell}))$ is the Jacobian of the activation function. Hence, one obtains
    \begin{equation}\label{e:deltaStar}
        \delta_{\ell^*}=\Psi_{\ell^*} W_{\ell^*+1}^\top \cdots \Psi_{L-1} W_{L}^\top \delta_L,
    \end{equation}
    where $\delta_{\ell^*}$ is the backpropagated error to the $\ell^*$-th layer.
    
    The gradient of the virtual loss function with respect to the adjustable weight matrix is $\nabla_{W_{\ell^{*}}}\mathcal{L}=-\delta_{\ell^*}z_{\ell^*-1}^\top$. Therefore, the adaptive law for updating the layer $\ell^*\in~\{1,\dots,L\}$ can be derived in the following form:
	\begin{equation}\label{e:adaptiveLaw}
		\dot{W}_{\ell^*}=\Gamma \delta_{\ell^*} z_{\ell^*-1}^\top-\gamma W_{\ell^*},
	\end{equation}
     where $\Gamma \succ 0$ is a diagonal gain matrix and $\gamma>0$.

	\subsection{Exponential Stability and Robustness Analysis}
    Let us define $\tilde{W}=W_{\ell^*}-W_{\ell^*}^*$. Consider the case where $\ell^*\neq L$. The first-order Taylor expansion of $z_{L-1}-z_{L-1}^*$ with respect to $W_{\ell^*}$ can be written as
	\begin{align}\label{e:error_z}
		z_{L-1}-z_{L-1}^*=&[\Psi_{L-1} W_{L-1} \Psi_{L-2}\cdots W_{\ell^*+1}\Psi_{\ell^*}]\tilde{W}z_{\ell^*-1} \nonumber\\
		&+\mathcal{O}(\|\tilde{W}\|^2),
	\end{align}
	where $\mathcal{O}(\|\tilde{W}\|^2)$ denotes the higher order terms. Consequently, premultiplying \eqref{e:error_z} by $\delta_L^\top W_L$ yields
	\begin{align}\label{e:error_z2}
		\delta_L^\top W_L S \tilde{W} z_{\ell^*-1}+\mathcal{O}=\mathrm{tr}(\tilde{W}^\top\delta_{\ell^*} z_{\ell^*-1}^\top)+\mathcal{O},
	\end{align}
	where $S=\Psi_{L-1} W_{L-1} \Psi_{L-2}\cdots W_{\ell^*+1}\Psi_{\ell^*} \in \mathbb{R}^{n_{L-1}\times n_{\ell^*}}$.
	
	Consider the following Lyapunov candidate function:
	\begin{equation}\label{e:Lyapunov}
		V(e,\tilde{W})=V_0(e)+\frac{1}{2}\mathrm{tr}(\tilde{W}^\top \Gamma^{-1}\tilde{W}),
	\end{equation} 
	where $V_0(e)=\frac{1}{2}e^\top Pe$. The time derivative of \eqref{e:Lyapunov} can be expressed by $\dot{V}=e^\top P\dot{e}+\mathrm{tr}(\tilde{W}^{\top} \Gamma^{-1}\dot{W}_{\ell^*})$. Considering \eqref{e:error} and $Q=\tfrac{1}{2}(PK + K^\top P^\top)$, one has $\dot{V}=-e^\top Q e+\delta_L^\top(-W_Lz_{L-1}+r)+\mathrm{tr}(\tilde{W}^{\top} \Gamma^{-1}\dot{W}_{\ell^*})$.
	Hence, substituting $r$, \eqref{e:error_z2}, and the adaptive law \eqref{e:adaptiveLaw} yields
	\begin{equation}\label{e:vdot_last}
		\dot{V}= -e^\top Q e -\gamma\mathrm{tr}(\tilde{W}^{\top} \Gamma^{-1} W_{\ell^*})+e^\top P g\epsilon+\mathcal{O}.
	\end{equation}
	
	It should be noted that if $\ell^*=L$, $-W_Lz_{L-1}+r$ above can be recast as $-W_L z_{L-1}+r = -\tilde{W} z_{L-1}+\epsilon$, and the remainder of the derivations for $\dot{V}$ follow along similar lines to the case of $\ell^*\neq L$.

	\begin{assumption}\label{assum:activationFunc}
		For each hidden layer $\ell=1,\dots,L$, the activation function $\phi_{\ell}$ is continuously differentiable (i.e., $\mathcal{C}^1$) and bounded, such that $\|\phi_{\ell}(a_{\ell})\|\leq \bar{\phi}$, where $\bar{\phi}>0$ and the boundedness is guaranteed by spectral normalization. Moreover, the Jacobian of the activation function $\Psi_{\ell}$ satisfies $\|\Psi_{\ell}(a_{\ell})\|\leq \bar{\Psi}$, where $\bar{\Psi}>0$. 
	\end{assumption}
	
	\begin{theorem}\label{th:uub}
		Let Assumptions~\ref{assum:idealWeight} and \ref{assum:activationFunc} hold. Considering the adaptive law \eqref{e:adaptiveLaw}, the tracking error $e$ and weight error $\tilde{W}$ in the closed-loop system \eqref{e:sys} with the total control input $u=u_\mathrm{nom}+u_\mathrm{NN}$ remain UUB with exponential convergence.
	\end{theorem}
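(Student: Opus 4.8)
The plan is to use the composite Lyapunov function $V$ in \eqref{e:Lyapunov} together with its derivative \eqref{e:vdot_last}, and to bound the right-hand side in the form $\dot V\le -\alpha V+C$ for suitable constants $\alpha>0$ and $C\ge 0$. Applying the comparison lemma to this scalar differential inequality then yields $V(t)\le e^{-\alpha t}V(0)+\tfrac{C}{\alpha}(1-e^{-\alpha t})$, which certifies exponential convergence of $V$ into the residual set $\{V\le C/\alpha\}$ and, through the quadratic bounds on $V$, the UUB of the pair $(e,\tilde W)$ with an exponential rate. This is precisely the assertion of Theorem~\ref{th:uub}.

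First I would bound the three explicit terms in \eqref{e:vdot_last}. The nominal design guarantees $e^\top PKe\ge c_e\|e\|^2$ with $c_e=\lambda_{\min}\!\big(\tfrac12(PK+K^\top P)\big)>0$, so the first term is negative definite. For the leakage term I would substitute $W_{\ell^*}=\tilde W+W^*$ to write $-\gamma\,\mathrm{tr}(\tilde W^\top\Gamma^{-1}W_{\ell^*})=-\gamma\,\mathrm{tr}(\tilde W^\top\Gamma^{-1}\tilde W)-\gamma\,\mathrm{tr}(\tilde W^\top\Gamma^{-1}W^*)$ and then apply Young's inequality to the indefinite cross term, which is the standard $\sigma$-modification argument: it retains a strictly negative quadratic $-\tfrac{\gamma}{2}\,\mathrm{tr}(\tilde W^\top\Gamma^{-1}\tilde W)$ and leaves the constant $\tfrac{\gamma}{2}\,\mathrm{tr}({W^*}^\top\Gamma^{-1}W^*)$, which is finite by the boundedness of $W^*$ in Assumption~\ref{assum:idealWeight}. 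The approximation term $e^\top Pg\epsilon$ is handled by Young's inequality as $e^\top Pg\epsilon\le\tfrac{c_e}{2}\|e\|^2+\tfrac{1}{2c_e}\bar g^2\lambda_{\max}(P)^2\bar\epsilon^2$, using $\|\epsilon\|\le\bar\epsilon$ from Assumption~\ref{assum:idealWeight} and the boundedness $\|g\|\le\bar g$ on the operating region.

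The main obstacle is the higher-order remainder $\mathcal O$ inherited from the Taylor linearization \eqref{e:error_z}--\eqref{e:error_z2}; after premultiplication by $\delta_L^\top W_L=e^\top Pg\,W_L$ it enters $\dot V$ as a term of order $\|e\|\,\|\tilde W\|^2$, which is not directly dominated by the available quadratics. Here I would invoke Assumption~\ref{assum:activationFunc} and spectral normalization: the bounds $\|\phi_\ell\|\le\bar\phi$, $\|\Psi_\ell\|\le\bar\Psi$, and $\|W_\ell\|<1$ make the chained Jacobian products (hence $S$ and the remainder coefficients) uniformly bounded, so that $\|\mathcal O\|\le\kappa\|e\|\,\|\tilde W\|^2$ for some $\kappa>0$. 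The leakage term $-\gamma W_{\ell^*}$ confines $\tilde W$ to a bounded invariant set, and on that set $\kappa\|\tilde W\|$ is bounded, so the remainder splits by Young's inequality into a small multiple of $\|e\|^2$ (absorbed into the margin left by $c_e$) plus a term dominated by $-\tfrac{\gamma}{2}\,\mathrm{tr}(\tilde W^\top\Gamma^{-1}\tilde W)$ whenever the gains satisfy a smallness condition; alternatively it is lumped into the residual constant $C$. This step is essentially a local argument (a region of attraction about $\tilde W=0$), which is why the conclusion is UUB rather than global asymptotic stability.

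Collecting the bounds gives $\dot V\le-\tfrac{c_e}{2}\|e\|^2-\tfrac{\gamma}{2}\,\mathrm{tr}(\tilde W^\top\Gamma^{-1}\tilde W)+C$. Using $\tfrac12 e^\top Pe\le\tfrac12\lambda_{\max}(P)\|e\|^2$ to lower bound the first quadratic by $\tfrac{c_e}{\lambda_{\max}(P)}V_0$, and recognizing the second as $\gamma$ times the weight part of $V$, I obtain $\dot V\le-\alpha V+C$ with $\alpha=\min\!\big(c_e/\lambda_{\max}(P),\,\gamma\big)$. The comparison lemma then delivers the stated exponentially convergent UUB, with ultimate bound governed by $C/\alpha$ and thus shrinking with $\bar\epsilon$ and $\|W^*\|$. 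The case $\ell^*=L$ follows the same chain using the simpler identity $-W_Lz_{L-1}+r=-\tilde W z_{L-1}+\epsilon$ noted after \eqref{e:vdot_last}, for which no Taylor remainder arises and the $\mathcal O$ obstacle is absent.
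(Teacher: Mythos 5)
Your proof is correct and follows essentially the same route as the paper's: the same composite Lyapunov function \eqref{e:Lyapunov}, the same $\sigma$-modification split of the leakage term via Young's inequality, the same Young bound on $e^\top P g\,\epsilon$, and the same comparison-lemma conclusion $\dot V\le-\alpha V+C$ with $\alpha$ set by $\lambda_{\min}(PK)/\lambda_{\max}(P)$ and $\gamma$. The only difference is cosmetic in your favor: where the paper simply absorbs the Taylor remainder by positing $\mathcal{O}(\|\tilde W\|^2)\le\bar o$ and lumping it into the residual constant, you sketch a justification (uniform Jacobian bounds from Assumption~\ref{assum:activationFunc} plus boundedness of $\tilde W$) before doing the same lumping, which matches the paper's argument in substance.
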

	\begin{proof}
		Let $\omega=\|\Gamma^{-\frac{1}{2}}\tilde{W}\|_\mathrm{F}$ and $c=~\|\Gamma^{-\frac{1}{2}}W^*\|_\mathrm{F}$. One has $\gamma\|\Gamma^{-\frac{1}{2}}\tilde{W}\|_\mathrm{F} \, \|\Gamma^{-\frac{1}{2}}W^*\|_\mathrm{F} \leq \frac{\gamma}{2}\omega^2+\frac{\gamma}{2}c^2$. Moreover, Young's inequality results in having
		\begin{equation*}
			\|P\| \, \|g\|  \, \|e\|\bar{\epsilon} \le \frac{\lambda_{\min}(Q)}{2}\|e\|^2+\frac{\|P\|^2 \, \|g\|^2}{2\lambda_{\min}(Q)}\bar{\epsilon}^2.
		\end{equation*}
		Hence, from \eqref{e:vdot_last}, we obtain $\dot{V}\le -\tfrac{\lambda_{\min}(Q)}{2}\|e\|^2-\tfrac{\gamma}{2}\omega^2+\tfrac{\|P\|^2 \, \|g\|^2}{2\lambda_{\min}(Q)}\bar{\epsilon}^2
		+\tfrac{\gamma}{2}c^2+\bar{o}$,	where $\mathcal{O}(\|\tilde{W}\|^2)\le \bar{o}$.
		
		From $V_0\le \tfrac{1}{2}\lambda_{\max}(P)\|e\|^2$, it can be inferred that $-\tfrac{\lambda_{\min}(Q)}{2}\|e\|^2\le -\tfrac{\lambda_{\min}(Q)}{\lambda_{\max}(P)}V_0$. Thus, $\dot{V} \le -\alpha V + \sigma +\bar{o}$, holds for $\alpha=\min\{\tfrac{\lambda_{\min}(Q)}{2\,\lambda_{\max}(P)},\tfrac{\gamma}{2}\}$ and $\sigma = \frac{\|P\|^2\|g\|^2}{2\lambda_{\min}(Q)}\,\bar{\epsilon}^{2} + \frac{\gamma}{2} \|\Gamma^{-\tfrac{1}{2}}W^*\|_\mathrm{F}^2$. Since $\tfrac{\lambda_{\min}(P)}{2}\|e\|^2\le V$, one has
		$\|e(t)\| \le e^{-\tfrac{\alpha t}{2}}\sqrt{\tfrac{2}{\lambda_{\min}(P)} V(0)} +\sqrt{\tfrac{2 \,\sigma+\bar{o}}{\alpha\,\lambda_{\min}(P)}},$ and $\|\Gamma^{-1/2}\tilde{W}(t)\|_{\mathrm{F}} \le e^{-\tfrac{\alpha t}{2}}\sqrt{2 V(0)} +\sqrt{\tfrac{2\,\sigma+\bar{o}}{\alpha}}$.
	\end{proof}
	
	\inlinecomment{It is worth noting that the design of the nominal controller $u_\mathrm{nom}$ does not rely on the DNN controller $u_\mathrm{NN}$. Hence, if there exists a monitoring system that can provide an estimate of the fault severity, i.e., $\hat{\Lambda}$, one can readily reconfigure the nominal controller to reduce the error bounds in Theorem~\ref{th:uub}. The latter has been investigated in the next subsection.}

    \subsection{Improved Error Bounds with a Fault Detection and Identification (FDI) Module}
	Suppose a dedicated FDI module provides an online estimate $\hat{\Lambda}(t)=\mathrm{diag}(\hat{\eta}_1(t),\dots,\hat{\eta}_m(t))$ of the actuator loss of effectiveness matrix $\Lambda(t)=\mathrm{diag}(\eta_1(t),\dots,\eta_m(t))$, where $\hat{\eta}_\mathrm{min}<\hat{\eta}_k\le 1$, and $\hat{\eta}_\mathrm{min}>0$. We reconfigure the nominal controller to account for this estimate as $u_{\mathrm{nom}}^{\mathrm{FDI}} = [g(x)\,\hat{\Lambda}]^\dagger(\dot{x}_d-f(x)-K e)$. Using $u=u_{\mathrm{nom}}^{\mathrm{FDI}}+u_{\mathrm{NN}}$, the error dynamics is
	\begin{align}\label{e:error_fdi}
		\dot{e}= -K e \;+\; g(x)(\Lambda u_{\mathrm{NN}} + [\Lambda-\hat{\Lambda}]\,u_{\mathrm{nom}}^{\mathrm{FDI}} + \hat{d}), 
	\end{align}
	where $d=g(x)\hat{d}$. Considering $g(x)\Lambda u_{\mathrm{NN}} = g(x)\bigl(u_{\mathrm{NN}} + [\Lambda-I]u_{\mathrm{NN}}\bigr)$, one obtains $\dot e = -K e + g(x)(u_{\mathrm{NN}} + r_{\mathrm{FDI}})$,	where $r_{\mathrm{FDI}} = [\Lambda-I]\,u_{\mathrm{NN}} + [\Lambda-\hat{\Lambda}]\,u_{\mathrm{nom}}^{\mathrm{FDI}} + \hat{d}$.

	\begin{assumption}\label{assum:fdi_bound}
		There exist bounded constants $\bar{\Delta}_\Lambda\in(0,1)$, $\kappa_g^\dagger>0$, $\bar{v}>0$, and $\bar{u}_{\mathrm{NN}}>0$ such that $\|\Lambda-\hat{\Lambda}\| \le \bar{\Delta}_\Lambda$, $\|g(x)^\dagger\|\le \kappa_g^\dagger$, $\|v(x)\|\le \bar{v}$, and $\|u_{\mathrm{NN}}\| \le \bar{u}_{\mathrm{NN}}$, $\forall t\ge 0$, where $v(x)=\dot{x}_d-f(x)-K e$.
	\end{assumption}
	
	Let $\|\hat{\Lambda}(t)^{-1}\|\leq \kappa_{\hat{\Lambda}}$, where $\kappa_{\hat{\Lambda}}>0$. Hence, the upper bounds on $u_{\mathrm{nom}}$ and $u_{\mathrm{nom}}^{\mathrm{FDI}}$ can be expressed by $\|u_{\mathrm{nom}}\|\le \kappa_g^\dagger \bar{v}$, and $\|u_{\mathrm{nom}}^{\mathrm{FDI}}\|= \|\,[g\hat{\Lambda}]^\dagger v\|\,\le\,\kappa_{\hat{\Lambda}}\kappa_g^\dagger\,\bar v$.
	
	\begin{lem}\label{lem:residual}
		Under Assumption~\ref{assum:fdi_bound}, one has $\|r\| \le \bar{\beta}(\bar{u}_{\mathrm{NN}}+\kappa_g^\dagger\bar{v})+\|\hat{d}\|=\bar{r}_1$, and $\|r_{\mathrm{FDI}}\| \le \bar{\beta} \,\bar{u}_{\mathrm{NN}}+\bar{\Delta}_\Lambda\,\kappa_{\hat{\Lambda}}\kappa_g^\dagger\bar{v}+\|\hat{d}\|=\bar{r}_2$, for all $t\ge 0$, where $\|\Lambda(t)-I\|\leq \bar{\beta}$ and $\bar{\beta}>0$. Consequently, we obtain
		$\bar{r}_1-\bar{r}_2=\kappa_g^\dagger\bar{v} (\bar{\beta}-\kappa_{\hat{\Lambda}}\bar{\Delta}_\Lambda).$
	\end{lem}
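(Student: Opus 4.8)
The plan is to obtain both residual bounds by the triangle inequality together with submultiplicativity of the induced $2$-norm, and then to subtract the two resulting worst-case expressions to quantify the improvement. All the ingredients are already in place: Assumption~\ref{assum:fdi_bound} supplies $\|\Lambda-\hat{\Lambda}\|\le\bar{\Delta}_\Lambda$, $\|u_{\mathrm{NN}}\|\le\bar{u}_{\mathrm{NN}}$, and $\|v(x)\|\le\bar{v}$; the auxiliary bound $\|\Lambda-I\|\le\bar{\beta}$ is assumed; and the controller magnitudes $\|u_{\mathrm{nom}}\|\le\kappa_g^\dagger\bar{v}$ and $\|u_{\mathrm{nom}}^{\mathrm{FDI}}\|\le\kappa_{\hat{\Lambda}}\kappa_g^\dagger\bar{v}$ have already been derived.

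First I would expand $r=[\Lambda-I](u_{\mathrm{nom}}+u_{\mathrm{NN}})+\hat{d}$ and apply the triangle inequality to split it into the nominal and network contributions. Bounding $\|[\Lambda-I]u_{\mathrm{nom}}\|\le\bar{\beta}\kappa_g^\dagger\bar{v}$ and $\|[\Lambda-I]u_{\mathrm{NN}}\|\le\bar{\beta}\bar{u}_{\mathrm{NN}}$ and collecting terms yields the first inequality. Next I would repeat the argument for $r_{\mathrm{FDI}}=[\Lambda-I]u_{\mathrm{NN}}+[\Lambda-\hat{\Lambda}]u_{\mathrm{nom}}^{\mathrm{FDI}}+\hat{d}$. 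The decisive structural difference is that the nominal contribution is premultiplied by the \emph{estimation gap} $\Lambda-\hat{\Lambda}$ rather than by the full severity $\Lambda-I$, so $\|[\Lambda-\hat{\Lambda}]u_{\mathrm{nom}}^{\mathrm{FDI}}\|\le\bar{\Delta}_\Lambda\kappa_{\hat{\Lambda}}\kappa_g^\dagger\bar{v}$, and the second inequality follows after collecting terms.

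For the final statement I would subtract the two bounds. The shared terms $\bar{\beta}\bar{u}_{\mathrm{NN}}$ and $\|\hat{d}\|$ cancel, leaving exactly $\kappa_g^\dagger\bar{v}(\bar{\beta}-\kappa_{\hat{\Lambda}}\bar{\Delta}_\Lambda)$, which establishes the claimed gap.

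The main subtlety is interpretive rather than computational: since both derivations produce \emph{upper} bounds, the difference $\|r\|-\|r_{\mathrm{FDI}}\|$ is being compared at the level of these certified worst-case bounds, so the displayed inequality should be read as the reduction in the guaranteed residual magnitude afforded by the FDI reconfiguration rather than as a pointwise lower bound on the realized difference. The gap is positive, and hence the reconfiguration tightens the bound of Theorem~\ref{th:uub}, precisely when the estimate is accurate enough that $\kappa_{\hat{\Lambda}}\bar{\Delta}_\Lambda<\bar{\beta}$, i.e., the scaled estimation error is smaller than the raw fault severity.
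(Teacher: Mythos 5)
Your proof is correct and takes essentially the same route as the paper, which dispatches the lemma in one line (``readily follows from Assumption~\ref{assum:fdi_bound} and the definitions of $r$ and $r_{\mathrm{FDI}}$''); your triangle-inequality expansion, with $\|[\Lambda-I]u_{\mathrm{nom}}\|\le\bar{\beta}\kappa_g^\dagger\bar{v}$ replaced by $\|[\Lambda-\hat{\Lambda}]u_{\mathrm{nom}}^{\mathrm{FDI}}\|\le\bar{\Delta}_\Lambda\kappa_{\hat{\Lambda}}\kappa_g^\dagger\bar{v}$ in the FDI case, is exactly the intended filling-in of those details. Your interpretive caveat is also well placed and consistent with how the paper actually uses the result: the proof of Lemma~\ref{lem:error_fdi} restates the gap as $\bar{r}_1-\bar{r}_2\ge\kappa_g^\dagger\bar{v}(\bar{\beta}-\kappa_{\hat{\Lambda}}\bar{\Delta}_\Lambda)$, i.e., a comparison of the certified worst-case bounds, rather than the literal pointwise inequality $\|r\|-\|r_{\mathrm{FDI}}\|\ge\cdots$ displayed in the lemma statement, which would not follow from two upper bounds alone.
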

	\begin{proof}
		The proof readily follows from Assumption~\ref{assum:fdi_bound} and definitions of $r$ and $r_{\mathrm{FDI}}$.
	\end{proof}

	From Lemma~\ref{lem:residual}, it can be inferred that the difference between $\bar{r}_1$ and $\bar{r}_2$ depends on the accuracy of the FDI module in the fault estimation, i.e., $\|\Lambda-\hat{\Lambda}\|$, and consequently, its upper bound, i.e., $\bar{\Delta}_\Lambda$. Consider the adaptive law~\eqref{e:adaptiveLaw} and the backpropagated errors~\eqref{e:delta}–\eqref{e:deltaStar}. Similar to Assumption~\ref{assum:idealWeight}, we consider an ideal representation of the residual $r_{\mathrm{FDI}}$ in the following assumption.
	
	\begin{assumption}\label{assum:idealWeightFDI}
		Let $\ell^*\neq L$. There exists a bounded and constant weight $W^*$ such that 
		$r_{\mathrm{FDI}} = W_{L}\,z_{L-1}^* + \epsilon_{\mathrm{FDI}}(t),$
		where $z_{L-1}^*$ is generated by the DNN with $W_{\ell^*}$ replaced by $W^*$, $\|\epsilon_{\mathrm{FDI}}(t)\|\le \bar{\epsilon}_{\mathrm{FDI}}$, and $\bar{\epsilon}_{\mathrm{FDI}}>0$. Moreover, if $\ell^*=L$, we have $r_{\mathrm{FDI}} = W_{L}^*\,z_{L-1} + \epsilon_{\mathrm{FDI}}(t)$.
	\end{assumption} 
	
	\begin{assumption}\label{assum:Lipschitz_r}
		Let $\|r\|\leq \bar{r}_{1}$ and $\|r_\mathrm{FDI}\|\leq\bar{r}_{2}$. Consider $\bar{\epsilon}$ and $\bar{\epsilon}_\mathrm{FDI}$ from Assumptions~\ref{assum:idealWeight} and \ref{assum:idealWeightFDI}, which are the upper bounds on the approximation errors achievable by $u_\mathrm{NN}$ when only the selected layer is adapted. There exists a constant $L_e>0$ such that $\bar{r}_{1}-\bar{r}_{2}\le\,L_e(\bar{\epsilon}-\bar{\epsilon}_\mathrm{FDI})$.
	\end{assumption}
	
	\begin{lem}\label{lem:error_fdi}
		Under Assumptions~\ref{assum:fdi_bound} and \ref{assum:Lipschitz_r}, the approximation error bound of $u_\mathrm{NN}$ satisfies $\bar{\epsilon}_{\mathrm{FDI}}\,\le\,\bar{\epsilon}- \frac{1}{L_e}\,\kappa_g^\dagger\,\bar{v}\,(\bar\beta-\kappa_{\hat{\Lambda}}\bar\Delta_\Lambda)$, and one has a smaller approximation error when the FDI module fault estimation satisfies $\kappa_{\hat{\Lambda}}\bar\Delta_\Lambda<\bar\beta$.
	\end{lem}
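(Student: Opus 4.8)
The plan is to obtain the result as an algebraic consequence of Lemma~\ref{lem:residual} combined with the comparison postulated in Assumption~\ref{assum:Lipschitz_r}. First I would instantiate the two residual-bound constants $\bar{r}_{1}$ and $\bar{r}_{2}$ with the explicit expressions already produced in Lemma~\ref{lem:residual} (which rests on Assumption~\ref{assum:fdi_bound}), namely $\bar{r}_{1}=\bar{\beta}(\bar{u}_{\mathrm{NN}}+\kappa_g^\dagger\bar{v})+\|\hat{d}\|$ and $\bar{r}_{2}=\bar{\beta}\,\bar{u}_{\mathrm{NN}}+\bar{\Delta}_\Lambda\,\kappa_{\hat{\Lambda}}\kappa_g^\dagger\bar{v}+\|\hat{d}\|$. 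Subtracting these, the shared terms $\bar{\beta}\,\bar{u}_{\mathrm{NN}}$ and $\|\hat{d}\|$ cancel, leaving $\bar{r}_{1}-\bar{r}_{2}=\kappa_g^\dagger\bar{v}\,(\bar{\beta}-\kappa_{\hat{\Lambda}}\bar{\Delta}_\Lambda)$, which is exactly the residual gap already recorded at the end of Lemma~\ref{lem:residual}.

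Next I would substitute this identity into the inequality furnished by Assumption~\ref{assum:Lipschitz_r}, i.e., $\bar{r}_{1}-\bar{r}_{2}\le L_e(\bar{\epsilon}-\bar{\epsilon}_{\mathrm{FDI}})$ with $L_e>0$. This yields $\kappa_g^\dagger\bar{v}\,(\bar{\beta}-\kappa_{\hat{\Lambda}}\bar{\Delta}_\Lambda)\le L_e(\bar{\epsilon}-\bar{\epsilon}_{\mathrm{FDI}})$. Dividing through by $L_e$ and rearranging to isolate $\bar{\epsilon}_{\mathrm{FDI}}$ on the left-hand side gives precisely the claimed bound $\bar{\epsilon}_{\mathrm{FDI}}\le\bar{\epsilon}-\tfrac{1}{L_e}\,\kappa_g^\dagger\,\bar{v}\,(\bar{\beta}-\kappa_{\hat{\Lambda}}\bar{\Delta}_\Lambda)$.

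For the qualitative statement I would simply inspect the sign of the correction term. Since $\kappa_g^\dagger$, $\bar{v}$, and $L_e$ are strictly positive, the subtracted quantity is positive exactly when $\bar{\beta}-\kappa_{\hat{\Lambda}}\bar{\Delta}_\Lambda>0$, that is, when the FDI estimate is accurate enough that $\kappa_{\hat{\Lambda}}\bar{\Delta}_\Lambda<\bar{\beta}$. In that regime $\bar{\epsilon}_{\mathrm{FDI}}<\bar{\epsilon}$, establishing the strict reduction in the achievable approximation error, and recovering $\bar{\epsilon}_{\mathrm{FDI}}\le\bar{\epsilon}$ as the boundary case.

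I do not anticipate a genuine obstacle, since the result is purely algebraic once the explicit residual bounds are in hand. The only point requiring care is the logical status of the comparison step: Assumption~\ref{assum:Lipschitz_r} is what licenses passing from a gap between the residual bounds to a gap between the approximation-error bounds, so I would flag explicitly that this is an assumption rather than a derived inequality, and I would verify that the sign conventions (which residual dominates) are consistent throughout, so that the chained inequality points in the direction needed to conclude $\bar{\epsilon}_{\mathrm{FDI}}\le\bar{\epsilon}$ rather than the reverse.
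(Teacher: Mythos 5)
Your proposal is correct and follows essentially the same route as the paper's own proof: combine the residual gap $\bar{r}_1-\bar{r}_2\ge\kappa_g^\dagger\,\bar{v}\,(\bar{\beta}-\kappa_{\hat{\Lambda}}\bar{\Delta}_\Lambda)$ from Lemma~\ref{lem:residual} with the comparison inequality of Assumption~\ref{assum:Lipschitz_r} and rearrange. Your version merely spells out the cancellation giving the gap explicitly and (rightly) flags that Assumption~\ref{assum:Lipschitz_r} is the postulated, not derived, step --- both of which are consistent with, and slightly more detailed than, the paper's two-line argument.
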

	\begin{proof}
		In light of Lemma~\ref{lem:residual}, the residual bound decreases by at least
		$\bar{r}_1-\bar{r}_2= \kappa_g^\dagger\,\bar{v}(\bar{\beta}-\kappa_{\hat{\Lambda}}\bar{\Delta}_\Lambda).$
		Consequently, Assumption~\ref{assum:Lipschitz_r} yields $\bar{\epsilon}_{\mathrm{FDI}} \le\bar{\epsilon}-\frac{1}{L_e}\,(\bar{r}_1-\bar{r}_2)$.
	\end{proof}
	
	Using~\eqref{e:error_fdi}, the same Lyapunov function~\eqref{e:Lyapunov}, and derivations leading to~\eqref{e:vdot_last}, we obtain $\dot{V} = -e^\top Q e - \gamma\,\mathrm{tr}(\tilde{W}^{\top}\Gamma^{-1}W_{\ell^*}) + e^\top P g\,\epsilon_{\mathrm{FDI}} + \mathcal{O}.$
	
	\begin{theorem}\label{th:uub_fdi}
		Let Assumptions~\ref{assum:activationFunc}-\ref{assum:Lipschitz_r} hold. The closed-loop system with $u=u_{\mathrm{nom}}^{\mathrm{FDI}}+u_{\mathrm{NN}}$ remains UUB with exponential convergence. Moreover, if the FDI estimates satisfy $\kappa_{\hat{\Lambda}}\bar{\Delta}_\Lambda<\bar{\beta}$, the upper bound on tracking error $e$ and weight error $\tilde{W}$ will be smaller than the case of $u=u_{\mathrm{nom}}+u_{\mathrm{NN}}$ in Theorem~\ref{th:uub}.
	\end{theorem}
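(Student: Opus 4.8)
The plan is to recognize that the FDI-reconfigured closed loop is structurally identical to the one analyzed in Theorem~\ref{th:uub}, so that the Lyapunov argument transfers almost verbatim with $r$ and $\bar{\epsilon}$ replaced by $r_{\mathrm{FDI}}$ and $\bar{\epsilon}_{\mathrm{FDI}}$, after which the bound improvement is quantified through Lemma~\ref{lem:error_fdi}. First I would note that the reconfigured error dynamics $\dot{e} = -Ke + g(x)(u_{\mathrm{NN}} + r_{\mathrm{FDI}})$ has exactly the form of \eqref{e:error}, and that Assumption~\ref{assum:idealWeightFDI} supplies the same ideal-weight representation for $r_{\mathrm{FDI}}$ that Assumption~\ref{assum:idealWeight} supplies for $r$. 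Consequently, using the same Lyapunov function \eqref{e:Lyapunov} and adaptive law \eqref{e:adaptiveLaw}, the chain of identities culminating in \eqref{e:vdot_last} carries over, yielding $\dot{V} = -e^\top PKe - \gamma\,\mathrm{tr}(\tilde{W}^\top\Gamma^{-1}W_{\ell^*}) + e^\top Pg\,\epsilon_{\mathrm{FDI}} + \mathcal{O}$, as already recorded immediately above the theorem statement.

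Next I would replay the inequalities from the proof of Theorem~\ref{th:uub}: the Frobenius-norm estimate $\gamma\omega c \le \tfrac{\gamma}{2}\omega^2 + \tfrac{\gamma}{2}c^2$ together with Young's inequality applied to the cross term $\|P\|\,\|g\|\,\|e\|\,\bar{\epsilon}_{\mathrm{FDI}}$. This produces $\dot{V} \le -\alpha V + \sigma_{\mathrm{FDI}} + \bar{o}$ with the identical contraction rate $\alpha = \min\{\tfrac{\lambda_{\min}(PK)}{2\lambda_{\max}(P)}, \tfrac{\gamma}{2}\}$ and with $\sigma_{\mathrm{FDI}}$ obtained from $\sigma$ by substituting $\bar{\epsilon}_{\mathrm{FDI}}$ for $\bar{\epsilon}$. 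The comparison lemma then delivers the same exponential envelope and the ultimate bounds $\sqrt{\tfrac{2\sigma_{\mathrm{FDI}}+\bar{o}}{\alpha\,\lambda_{\min}(P)}}$ and $\sqrt{\tfrac{2\sigma_{\mathrm{FDI}}+\bar{o}}{\alpha}}$ for $\|e\|$ and $\|\Gamma^{-1/2}\tilde{W}\|_{\mathrm{F}}$, establishing UUB with exponential convergence and completing the first claim.

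For the bound-improvement claim I would invoke Lemma~\ref{lem:error_fdi}: under $\kappa_{\hat{\Lambda}}\bar{\Delta}_\Lambda < \bar{\beta}$ the factor $\bar{\beta} - \kappa_{\hat{\Lambda}}\bar{\Delta}_\Lambda$ is strictly positive, so $\bar{\epsilon}_{\mathrm{FDI}} \le \bar{\epsilon} - \tfrac{1}{L_e}\kappa_g^\dagger\bar{v}\,(\bar{\beta} - \kappa_{\hat{\Lambda}}\bar{\Delta}_\Lambda) < \bar{\epsilon}$. Since $\sigma$ is strictly increasing in $\bar{\epsilon}^2$ while $\alpha$, $\lambda_{\min}(P)$, and $\bar{o}$ are unchanged between the two controllers, I would conclude $\sigma_{\mathrm{FDI}} < \sigma$ and hence strictly smaller ultimate bounds than those of Theorem~\ref{th:uub}.

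The proof itself is essentially a transcription, so the delicate point is not the algebra but justifying that the structural constants entering $\sigma$ — in particular the higher-order remainder bound $\bar{o}$ and the ideal-weight norm $\|\Gamma^{-1/2}W^*\|_{\mathrm{F}}$ — are genuinely common to both controllers, so the comparison isolates the effect of $\bar{\epsilon}$ versus $\bar{\epsilon}_{\mathrm{FDI}}$. I would argue this holds because the DNN architecture, the adapted layer $\ell^*$, and Assumption~\ref{assum:activationFunc} are identical in both settings, so the sole channel through which FDI reconfiguration tightens the bound is the reduced residual approximation error certified by Lemma~\ref{lem:error_fdi}.
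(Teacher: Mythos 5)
Your proposal is correct and follows essentially the same route as the paper's proof, which simply states that the UUB derivation carries over from Theorem~\ref{th:uub} with $\sigma_{\mathrm{FDI}}$ replacing $\sigma$, and then invokes Lemma~\ref{lem:error_fdi} to get $\bar{\epsilon}_{\mathrm{FDI}}<\bar{\epsilon}$ and hence smaller bounds. Your explicit replay of the Young's-inequality steps and your observation that $\bar{o}$ and $\|\Gamma^{-1/2}W^*\|_{\mathrm{F}}$ must be common to both settings (the paper tacitly uses the same $W^*$ in Assumptions~\ref{assum:idealWeight} and \ref{assum:idealWeightFDI}) only make explicit what the paper leaves implicit.
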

	\begin{proof}
		
		The derivation to show $u=u_{\mathrm{nom}}^{\mathrm{FDI}}+u_{\mathrm{NN}}$ results in having UUB follows along similar lines to that of Theorem~\ref{th:uub}. Moreover, it can be easily seen that the term $\sigma_\mathrm{FDI} = \frac{\|P\|^2\|g\|^2}{2\lambda_{\min}(Q)}\,\bar{\epsilon}_\mathrm{FDI}^{2} + \frac{\gamma}{2} \|\Gamma^{-\tfrac{1}{2}}W^*\|_\mathrm{F}^2$ appears in the upper bounds for $\|e(t)\|$ and $\|\Gamma^{-1/2}\tilde{W}(t)\|_{\mathrm{F}}$. Thus, considering Lemma~\ref{lem:error_fdi}, $\kappa_{\hat{\Lambda}}\bar\Delta_\Lambda<\bar\beta$ results in $\bar{\epsilon}_{\mathrm{FDI}}<\bar{\epsilon}$, which implies smaller upper bounds for errors.
	\end{proof}
	Theorem~\ref{th:uub_fdi} explicitly quantifies how the FDI estimation accuracy, characterized by $\|\Lambda-\hat{\Lambda}\|$ and its upper bound $\bar{\Delta}_\Lambda$, affects the resulting tracking-error bound. Therefore, even in the presence of bounded but imperfect fault estimates, the closed-loop stability is maintained.

	\section{Numerical Case Study}\label{s:simu}
    To validate the proposed fault recovery control methodology, we consider a 3-axis attitude control system with loss of effectiveness faults for a rigid spacecraft that is actuated by four reaction wheels in a tetrahedral configuration. The inertia matrix is given by $I=\mathrm{diag}(1,1,0.8)$ ($\mathrm{kg}\cdot \mathrm{m}^2$), each reaction wheel has an inertia of $J_\omega=0.01$ ($\mathrm{kg}\cdot \mathrm{m}^2$), and the maximum deliverable torque per wheel is limited to $0.14$ ($\mathrm{N}\cdot \mathrm{m}$). We have used \cite{lee2017geometric} to define the spacecraft and reaction wheels parameters. The gains of the nominal controller $u_\mathrm{nom}$ are tuned as $K_p=\mathrm{diag}(22.5,18,15)$ and $K_d=\mathrm{diag}(12,9,7.5)$. Let $\theta=[\theta_1,\, \theta_2, \, \theta_3]^\top \in \mathbb{R}^3$ ($\mathrm{rad}$) where $\theta_1$ denotes the roll, $\theta_2$ is the pitch, and $\theta_3$ is the yaw angle of the spacecraft. The desired roll, pitch, and yaw angles are $\theta_d=[\theta_1^d,\, \theta_2^d,\, \theta_3^d]^\top$, where we have considered $\theta_d(t)=[0.05\sin(0.2\pi t), \, 0.05\cos(0.2\pi t), \frac{\pi}{250} t]^\top$ (i.e., a 3-axis helical trajectory). We define the tracking errors as $\tilde{\theta}_1=\theta_1-\theta^d_1$, $\tilde{\theta}_2=\theta_2-\theta^d_2$, $\tilde{\theta}_3=\theta_3-\theta^d_3$,  and $\tilde{\theta}=\theta-\theta_d$. The trained $u_\mathrm{NN}$ has $6$ layers with the triple $(\tilde{\theta}, \dot{\tilde{\theta}}, u_\mathrm{nom})$ as its input and layers of size $n_1=9$, $n_2=15$, $n_3=15$, $n_4=15$, $n_5=15$, and $n_6=3$ for the body-frame torque. Moreover, by utilizing $\widehat{\mathrm{ACE}}_{\ell}$, we identified the $4$-th layer as the best candidate to update in the recovery control.
	
	\inlinecomment{As can be seen in Fig.~\ref{fig:error_norm_FF}, under fault-free conditions, both controllers $u_\mathrm{nom}$ and $u_\mathrm{nom}+u_\mathrm{NN}$ with the adaptive law \eqref{e:adaptiveLaw} remain below the defined error bound of $0.017$. }We consider $50\%$ and $75\%$ loss of effectiveness faults starting from $t=10$ to $25$ (s) and $t=20$ to $50$ (s) in reaction wheels number $3$ and $2$, respectively.\inlinecomment{ In Fig.~\ref{fig:troquesF}, the effectiveness of each actuator and their applied torques are depicted while the $4$-th layer of $u_\mathrm{NN}$ is being updated as per \eqref{e:adaptiveLaw}.} To demonstrate the impact of updating various layers of the DNN on the tracking error, Fig.~\ref{fig:errorNormF} is provided that illustrates $\|\tilde{\theta}\|$ while the $1$-st, the $4$-th, the $5$-th, the $6$-th (i.e., the last layer), the full-network (i.e., all the layers), and no layer of the network are being updated. As is shown in Fig.~\ref{fig:errorNormF}, updating the $4$-th layer of $u_\mathrm{NN}$ results in having the smallest tracking error. Moreover, results in Table~\ref{tab:rmse_maxerr} show that updating the proposed ACE-selected layer (i.e., the $4$-th layer) achieves the lowest root-mean-square error (RMSE) and maximum error (MaxError) values, with $\mathrm{RMSE} = \sqrt{\frac{1}{T}\!\int_{0}^{T} \|\tilde{\theta}(t)\|^2\,dt}$ and $\mathrm{MaxError} = \max_{t \in [0,T]} \|\tilde{\theta}(t)\|$, for $T>0$. Since the computed ACE values $\widehat{\mathrm{ACE}}_{4}=-0.004387$ and $\widehat{\mathrm{ACE}}_{5}=\widehat{\mathrm{ACE}}_{6}=-0.004256$ are close, it is expected to get similar RMSEs for layers $4$, $5$, and $6$ as shown in Table~\ref{tab:rmse_maxerr}. Finally, we incorporated an FDI module that provides fault estimates subject to estimation errors. As for the FDI module, we employ an Interacting Multiple-Model (IMM) estimator \cite{10104008}, where each filter corresponds to a certain actuator fault severity hypothesis. Any nonlinear state estimator could be used within this IMM structure, but we use Unscented Kalman Filters (UKF). The filters use the control input $u$, $\theta$, and the reaction wheels angular velocities. The IMM computes the posterior probability of each mode, and the most probable mode provides the actuator effectiveness estimate $\hat{\Lambda}$, which satisfies Assumption~\ref{assum:fdi_bound}. Under the same fault scenario a $40\%$ loss of effectiveness was estimated for reaction wheel $3$ (actual $50\%$), and a $45\%$ loss was estimated for reaction wheel $2$ (actual $75\%$). As illustrated in Fig.~\ref{fig:errorNormFwithFDI}, the more accurate estimate for reaction wheel 3 leads to an improvement in tracking performance, whereas the less accurate estimate for reaction wheel~$2$ results in a degradation of performance. As for the computational complexity of the online phase, a forward pass through the DNN costs $O\!\big(\sum_{\ell=1}^{L} n_\ell n_{\ell-1}\big)$. Backpropagating to a single adapted layer $\ell^*$ has cost $O\!\big(\sum_{\ell=\ell^*+1}^{L} n_\ell n_{\ell-1}\big)$, and the update itself adds $O(n_{\ell^*} n_{\ell^*-1})$. In our network $(9,15,15,15,15,3)$, the total number of weights is $1080$. However, in addition to backpropagation through layers $6$ to $4$, the ACE-selected layer~4 only has $225$ parameters to update. 
	
	
	
	\begin{figure}[!t]
    \centering

    \begin{subfigure}{\linewidth}
        \includegraphics[width=\linewidth, height=4.8cm]{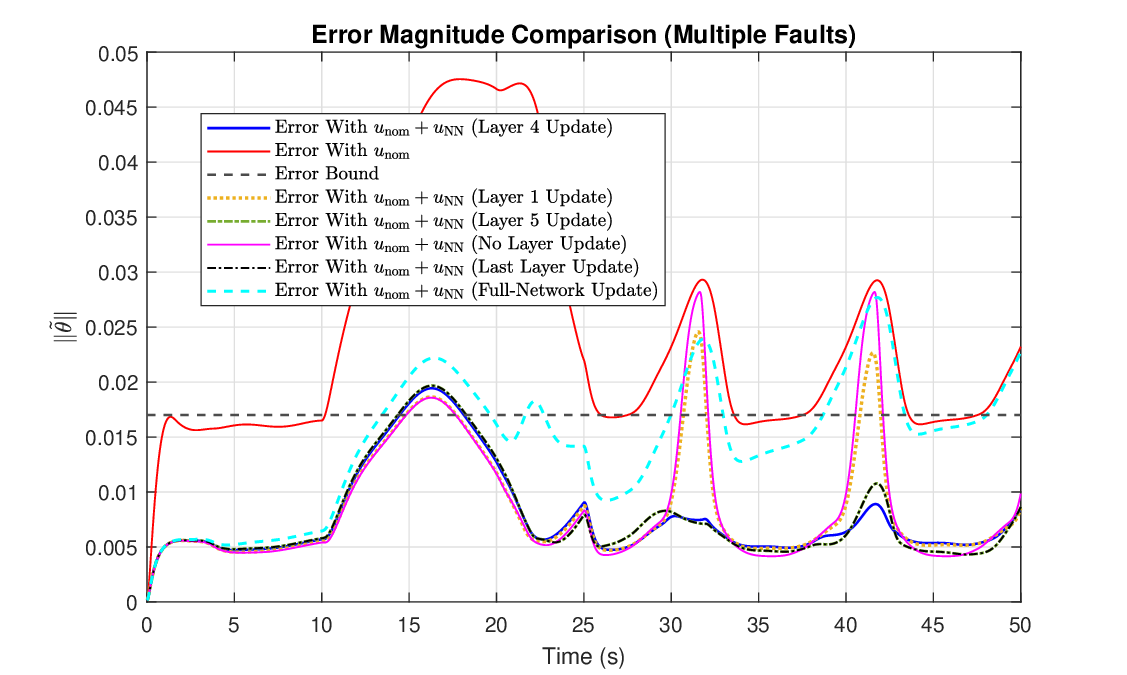}
        \caption{Norm of $\|\tilde{\theta}\|$ (rad) while updating different $u_\mathrm{NN}$ layers.}
        \label{fig:errorNormF}
    \end{subfigure}

    \begin{subfigure}{\linewidth}
        \includegraphics[width=\linewidth, height=4.7cm]{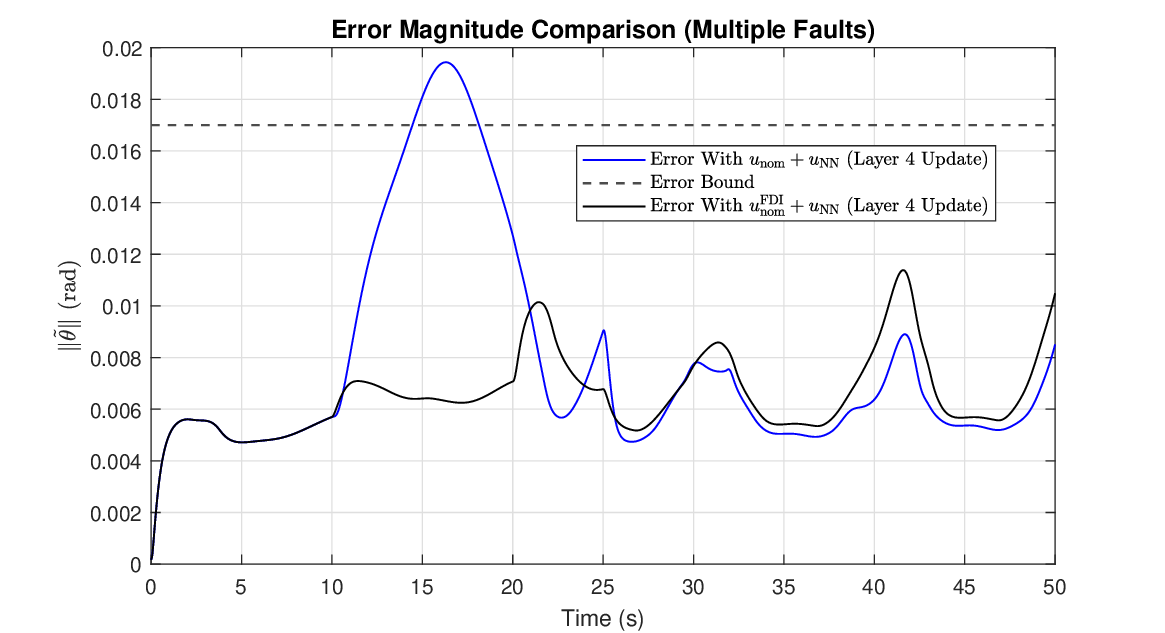}
        \caption{Norm of tracking error $\|\tilde{\theta}\|$ (rad) with FDI.}
        \label{fig:errorNormFwithFDI}
    \end{subfigure}

    \caption{(a) Tracking error with various NN updates; (b) error with FDI reconfiguration.}
    \label{fig:allFaults}
\end{figure}

	\begin{table}[t]
		\centering
		\caption{Comparison of tracking error performance under faults, where lower values indicate better performance.}
		\label{tab:rmse_maxerr}
		\begin{tabular}{lcc}
			\toprule
			\textbf{Adaptation Strategy} & \textbf{RMSE (rad)} & \textbf{Max Error (rad)} \\
			\midrule
			Layer~4 Update (ACE-Selected)                     & 0.008819 & 0.01943 \\
			Layer~6 (i.e., the last layer) Update                 & 0.008899 & 0.01967 \\
			Layer~5 Update                 & 0.008899 & 0.01967 \\
			Layer~1 Update                 & 0.009898 & 0.02456 \\
			No Layer Update                 & 0.010462 & 0.02819 \\
			Full-Network Update                & 0.015695 & 0.02769 \\
			Only Nominal Controller $u_\mathrm{nom}$                   & 0.027062 & 0.04755 \\
			\bottomrule
		\end{tabular}
	\end{table}

	
	\section{Conclusion}\label{s:conclu}
	This paper presented a fault recovery control framework for nonlinear control-affine systems with unknown actuator faults and disturbances. We introduced a two-stage methodology that evaluates the average causal effect (ACE) of each DNN layer to identify the most critical one for fault compensation and adapted only that layer online to ensure uniform ultimate boundedness (UUB) of the tracking error. The proposed causality-guided layer selection goes beyond the widely used last-layer adaptation strategy by revealing which layer is the most impactful for fault recovery control. A spacecraft attitude control case study demonstrated that the proposed ACE-guided adaptation achieved the lowest tracking error under faults. The proposed framework is readily applicable to a broad class of nonlinear control-affine systems. In particular, the offline ACE-based layer selection and the online adaptive law depend only on the closed-loop tracking error under fault and disturbance scenarios. Furthermore, the present work assesses computational feasibility through analytical complexity for hardware implementation.
	
	\bibliographystyle{ieeetr}
	\bibliography{
		Refs/NNUpdateRef}
\end{document}